
\documentclass[12pt,notitlepage]{article}%
\usepackage{tikz}
\usepackage{amsthm}
\usepackage{setspace}
\usepackage{eurosym}
\usepackage{subfigure}
\usepackage{xcolor}
\usepackage{amssymb}
\usepackage{graphicx}
\usepackage[colorlinks,linkcolor=links,citecolor=cites,urlcolor=MyDarkBlue]%
{hyperref}
\usepackage{amsfonts}
\usepackage{amsmath}
\usepackage{amstext}
\usepackage{appendix}
\usepackage{appendix}
\usepackage{mathpazo}
\usepackage{natbib}
\setcounter{MaxMatrixCols}{30}
\providecommand{\U}[1]{\protect\rule{.1in}{.1in}}
\usetikzlibrary{positioning,fit,shapes,arrows.meta}
\newtheorem{theorem}{Theorem}

\newtheorem{corollary}{Corollary}

\newtheorem{definition}{Definition}
\newtheorem{example}{Example}

\newtheorem{lemma}{Lemma}

\newtheorem{proposition}{Proposition}
\newtheorem{remark}{Remark}

\newtheorem*{discussion*}{Discussion}
\numberwithin{equation}{section}
\setlength{\topmargin}{0in}
\setlength{\textheight}{8.5in}
\setlength{\oddsidemargin}{0.1in}
\setlength{\evensidemargin}{0.1in}
\setlength{\textwidth}{6.5in}
\setlength{\headheight}{0in}

\definecolor{MyDarkBlue}{rgb}{0,0.08,0.45}
\definecolor{cites}{HTML}{324b13}
\definecolor{links}{HTML}{1a663b}
\definecolor{MyLightMagenta}{cmyk}{0.1,0.8,0,0.1}
\hypersetup{
	colorlinks,citecolor=blue,filecolor=black,linkcolor=blue,urlcolor=blue
}

\doublespacing
\parindent=0.8cm
\setlength{\parskip}{1.2ex}
\sloppy
\graphicspath{{D:/OneDrive/TEX/PicsforPapers/}}
\sloppy
\begin{document}

\title{Credible Nash Bargaining Solution for Bilateral Trading Networks \thanks{We thank Gaoji Hu, Herve Moulin, Jie Zheng, and participants at the 2024 Shanghai Microeconomics Workshop, 2024 HEAT Workshop at HUST, and 2026 iES Mechanism Design Workshop at SWUFE for helpful comments and discussions. This project is supported by NSFC (No. 72273085 and No. 72033004).}}
\author{Kang Rong\thanks{School of Economics, Shanghai University of Finance and Economics. \\
\hspace*{1.8em}	\textit{Email addresses:} rong.kang@mail.shufe.edu.cn (Rong), tang.qianfeng@mail.shufe.edu.cn (Tang).}
	\hspace{0.6cm}	Qianfeng Tang\footnotemark[2]}
\maketitle

\begin{abstract}
	
We study surplus division in network-constrained bilateral matching markets with transferable utility. We introduce a new solution concept—the credible bargaining solution—which refines stability by requiring that, for each matched pair of buyer and seller, surplus be divided according to the Nash bargaining solution with respect to credible outside options, defined as their payoffs in some stable outcome of the submarket obtained by removing their link. We establish general properties of the credible bargaining solution, prove existence, and provide a complete characterization in the unit-surplus case based on the notion of essential links.

\bigskip

\noindent \textbf{Keywords}: Assignment game, bargaining, bipartite networks, stability \\
\noindent \textbf{JEL codes}: C78, D85

\end{abstract}

\newpage

\section{Introduction}

We consider a market consisting of a finite set of buyers and a finite set of sellers. Each seller is endowed with a single indivisible object, and each buyer demands exactly one unit. Trading opportunities are constrained by a network: a buyer and a seller can trade only if they are connected in the network. The network structure may capture informational, institutional, or physical constraints on exchange. Beyond standard exchange markets, this framework encompasses a range of applications, including marriage markets with transfers, labor markets, and other bilateral matching environments.

In short, the market we study is an assignment game of \cite{ShapleyShubik72}, restricted to a trading network. An outcome of the market specifies a matching between buyers and sellers together with a payoff for each agent. A natural requirement for an outcome is stability: there should be no buyer–seller pair that is connected in the network, remains unmatched, and can jointly benefit by matching with each other and appropriately sharing the surplus from trade. Stability imposes constraints across matched pairs and ensures the absence of blocking deviations. However, it places only weak restrictions on the division of surplus within each matched pair. In particular, a wide range of payoff distributions, including highly unequal ones, may satisfy stability.\footnote{See \cite{ShapleyShubik72} and \cite{Rochford84} for details.}

This observation motivates our focus on how a matched buyer–seller pair bargains over the surplus generated by trade. When bargaining between a given pair breaks down, each party may turn to alternative partners available through the network, thereby realizing an outside option. These outside options are endogenously determined by the structure of the network and play a critical role in shaping bargaining outcomes. Understanding how outside options arise and how they affect surplus division is therefore essential for analyzing networked matching markets.\footnote{See \cite{Nash53}, \cite{Harsanyi56}, \cite{Harsanyi77}, and \cite{Crawford-Rochford86} for discussions on variable and endogenous outside options in bargaining.}

We model bilateral bargaining as follows. When a buyer $i$ and a seller $j$ are matched and begin to bargain, their outside options are determined by the submarket obtained by removing the link between $i$ and $j$. Intuitively, if bargaining between $i$ and $j$ breaks down, each agent may seek alternative trading partners in the network, but the two can no longer trade with each other. We say that the outside options $(d_i, d_j)$ of buyer $i$ and seller $j$ are credible if they coincide with the respective payoffs of $i$ and $j$ in some stable outcome of this submarket. Imposing credibility therefore amounts to assuming that the two agents agree on which outcome will arise in the submarket and that this outcome is stable.

An outcome is said to be a credible bargaining solution of the market if it satisfies two conditions. First, the outcome is stable. Second, for every matched buyer–seller pair, the division of surplus is determined by the Nash bargaining solution with respect to some pair of credible outside options. That is, each agent receives her outside option, and the remaining surplus is split equally between the buyer and the seller. In this definition, the second condition imposes a justifiability requirement on surplus division within each matched pair, while the first condition imposes no-blocking-deviation constraints across matched pairs.

Our analysis begins by establishing several general results. First, we show that if a pair of agents is matched in some optimal matching—that is, a matching that maximizes total surplus—then for any pair of credible outside options $(d_i, d_j)$ they may propose, the sum of these outside options does not exceed the surplus generated by matching the two agents. As a consequence, the bargaining problem between them is always well defined.

Second, we show that if a vector of payoffs is part of some credible bargaining solution, then it remains part of a credible bargaining solution when combined with any optimal matching. This result implies that, like stability, the credible bargaining solution primarily restricts the distribution of payoffs rather than the specific matching of agents.

Third, we show that at any stable outcome, if a matched pair does not constitute an essential link of the market, then the division of surplus within that pair is always justifiable as a Nash bargaining solution under some pair of credible outside options. A link is said to be essential if it is matched in every optimal matching. This result implies that, in order to verify whether a stable outcome qualifies as a credible bargaining solution, it suffices to examine surplus division only on matched essential links.

Our main results consist of an existence result and a characterization result for the unit-surplus case. To establish existence, we show that the average of the buyer-optimal and buyer-minimal stable payoff vectors, when paired with any optimal matching, always constitutes a credible bargaining solution. We refer to this outcome as the solution with symmetric compromise. To justify the surplus division within each matched pair $i$ and $j$, each agent claims as her outside option the optimal stable payoff she would obtain in the submarket formed by removing the link $ij$, reduced by a common compromise amount $\delta$. We show that such a compromise always exists to ensure that the resulting claims constitute credible outside options.

Our characterization result shows that, in markets in which every matched pair generates exactly one unit of surplus, an outcome is a credible bargaining solution if and only if it is stable and, for every matched essential link  $ij$, both $i$ and $j$ receive $1/2$. By our earlier results, it suffices to verify surplus division only on matched essential links. The characterization relies on the key observation that if a matched link $ij$ is essential, then once this link is removed from the network, there exists, for each of $i$ and $j$, an optimal matching in which that agent remains unmatched. Consequently, the stable payoff—and hence the outside option—of each agent in the submarket is zero. We further provide a graph-theoretic procedure to identify essential links. Specifically, we first apply the Edmonds–Gallai decomposition to the network and then further decompose the perfectly matched component into elementary subgraphs. A link is essential if and only if it forms an elementary subgraph by itself.

Our study is most closely related to the classical contributions of \cite{ShapleyShubik72}, \cite{Rochford84}, \cite{CookYamagishi92}, and \cite{Crawford-Rochford86}. The market we study is essentially an assignment game in the sense of \cite{ShapleyShubik72}, that is, a matching problem with transferable utility. Many foundational results on assignment games—including the existence and structure of optimal matchings, the core, and competitive equilibria—originate from \cite{ShapleyShubik72}.

\cite{Rochford84} observes that the set of stable outcomes in assignment games is typically large and may contain outcomes with highly unequal payoffs. To refine this indeterminacy, she proposes the symmetrically pairwise-bargained (SPB) solution, which selects more egalitarian outcomes from the core. An equivalent concept, known as the balanced outcome, is later introduced by \cite{CookYamagishi92} in the sociology literature. Both concepts assume that, when a matched pair of agents bargains, their outside options are given by their maximal deviation payoffs, taking the payoffs of all other agents as fixed. In particular, a breakdown in bargaining between a given pair does not affect the payoffs or expectations of other agents. This assumption is fundamentally different from ours: in our framework, outside options are determined by the stable outcomes of submarkets that arise when links are removed. We discuss the SPB solution and the balanced outcome in detail in Subsection \ref{sec:balanced_outcome}.

To our knowledge, \cite{Crawford-Rochford86} is the first to define outside options by explicitly considering the submarket in which the surplus from matching a given pair of agents is eliminated. They define a notion of bargaining equilibrium recursively: within every matched pair, outside options must coincide with payoffs from some bargaining equilibrium of the corresponding submarket. Moreover, their concept requires that no agent has an incentive to deviate to bargain with any other agent. In contrast, our definition of a credible bargaining solution is not recursive, as outside options are taken from stable outcomes of the submarket. Furthermore, the no-deviation requirement in \cite{Crawford-Rochford86} differs conceptually from the no-deviation requirement imposed by stability.

The credible bargaining solution we propose is a cooperative concept; it abstracts from the details of bargaining protocols and strategic complexity that are central to noncooperative models. Important works that study noncooperative bargaining on networks include \cite{Corominas04}, \cite{Polanski07}, \cite{AbreuManea12GEB}, and \cite{AbreuManea12JET}.\footnote{All of these papers focus on the study of the network structure, by assuming that all links generate exactly one unit of surplus.} Among them, \cite{Corominas04} and \cite{Polanski07} study dynamic bargaining processes that are relatively centralized, in the sense that the selection of simultaneous trades or offer-making opportunities is governed by maximum matchings. In contrast, \cite{AbreuManea12GEB} and \cite{AbreuManea12JET} adopt a more decentralized approach, assuming that in each period a single linked pair is randomly selected to bargain. \cite{Manea16} provides a comprehensive comparison of these models and their equilibrium predictions.

The rest of this paper is organized as follows. In Section \ref{sec:prelim}, we introduce the notations and the classical concepts of stable outcomes and balanced outcomes. In Section \ref{sec:CBS}, we propose the credible bargaining solution, and in Section \ref{sec:results}, we present our analysis on this concept. We conclude in Section \ref{sec:concl}.

\section{Preliminaries}\label{sec:prelim}

\subsection{Notations}

There are a finite set of buyers $I$ and a finite set of sellers $J$ in a bilateral trading network which we call the market. These buyers and sellers are also called agents. Each seller has exactly one unit of object to sell and each buyer demands exactly one unit of object. If buyer $i$ buys from seller $j,$ this will generate a surplus $v_{ij}\geq0$. In the market, buyers and sellers are connected by an undirected bipartite network (graph) $G,$ which we define as a subset of $I\times J$. If $(i,j) \in G,$ then $(i,j)$ is called a link of the network $G$ and buyer $i$ and seller $j$ are said to be connected by the network $G$ via the link $(i,j)$. When no confusion may arise, we simply use $ij$ to denote a link $(i,j)$, and we use $G_{-ij}$ to denote the subgraph $G\backslash\{(i,j)\}$. Likewise, we use $G_{-i}\equiv\left\{  (i^{\prime},j^{\prime}) \in G:i^{\prime}\neq i \right\}$ to denote the subgraph of $G$ in which all of agent $i$'s links are removed, and we use $G_{-i,-j}\equiv\left\{(i^{\prime},j^{\prime})\in
G:i^{\prime}\neq i,j^{\prime}\neq j\right\}  $ to denote the subgraph of $G$ in which all of agent $i$'s links and agent $j$'s links are removed. 

For convenience, we denote the market by $(G,v),$ where $v\equiv (v_{ij})_{i\in I,j\in J}$. In the market, a buyer $i$ may buy from a seller $j$ only if they are connected, i.e., only if $ij\in G$. For $ij\in G,$ we may also call $v_{ij}$ as link $ij$'s valuation. When every buyer is connected with every seller, the market represents an assignment game of \cite{ShapleyShubik72}, with valuation matrix $v$.

A matching is a one-to-one function $\mu:I\cup J \rightarrow I\cup J\cup \{\emptyset\},$ such that (i) $\mu(i)  \in J \cup \{\emptyset\}$ for all $i\in I,$ and $\mu(j)\in I\cup \{\emptyset\}  $ for all $j\in J;$ (ii) $\mu (i)  =j$ iff $\mu(j)  =i,$ for all $i\in I,j\in J$; and (iii) $\mu(i)  =j\in J$ only if $ij\in G$. For any buyer $i$ and seller $j,$ if $\mu(i)  =j,$ we say that they are matched by $\mu$. If for $k\in I\cup J,\mu(k)  =\emptyset,$ we say that agent $k$ is unmatched.

Every matching $\mu$ can also be represented by the set of disjoint links in $G$ that it matches. If $\mu(i)=j$, we also write $ij \in\mu$.

An outcome of the market $(G,v)$ is a pair $(\mu,x)$, where $\mu$ is matching and $x\in\mathbb{R}^{I}\times\mathbb{R}^{J}$ is a payoff vector. Throughout, we consider only outcomes that are feasible. An outcome $(\mu, x)$ is feasible if $x$ is a division of the total surplus generated by $\mu$, i.e., if
\[
\sum_{i\in I}x_{i}+ \sum_{j\in J}x_{j}=\sum_{ij\in\mu}v_{ij}.
\]

\subsection{Stable outcome}

\begin{definition}
An outcome $\left(  \mu,x\right)  $ is \textbf{stable} if for all $i\in I,j\in J:$

\begin{enumerate}
\item $x_{i},x_{j}\geq0;$

\item $x_{i}+x_{j}\geq v_{ij}, \forall ij \in G.$
\end{enumerate}
\end{definition}

If $(\mu,x)$ is a stable outcome, then $x_{i}+x_{j}=v_{ij}$ for any matched pair $ij$, and $x_{k}=0$ for all unmatched agent $k$.

A matching $\mu$ is said to be \textbf{optimal} if the total surplus it generates, $\sum_{ij\in\mu}v_{ij},$ is the highest among all matchings. It is well-known that the matching in any stable outcome is an optimal matching, and the set of payoff vectors in stable outcomes coincides with the core and the set of competitive equilibrium payoff vectors. Also, if $\mu$ is any optimal matching and $x$ is the payoff vector of any stable outcome, then $(\mu,x)$ is also a stable outcome. Therefore, if an agent is unmatched by any optimal matching, then her payoff is zero across all stable outcomes. 

For details of these results, see \cite{ShapleyShubik72} or Chapter 8.1 of \cite{RothSoto90}.\footnote{See also \cite{NunezRafels15} for a more recent survey on assignment markets.}

\begin{example}\label{ex:stability}
Suppose $I=\{1, 3\}$, $J=\{2, 4\}$, $G=\{12, 23, 34\}$, and $v_{ij} =1$ for
all links of $G$.

\begin{figure}[h]
\centering
\begin{tikzpicture}[
		every node/.style={circle, draw, minimum size=1cm},
		every edge/.style={draw, thick, ->}
		]
		
		\node (1) at (0,0) [label={[above=-0.1cm]: $x_1$}] {1};
		\node (2) [right=of 1] [label={[above=-0.1cm]: $x_2$}] {2};
		\node (3) [right=of 2] [label={[above=-0.1cm]: $x_3$}] {3};
		\node (4) [right=of 3] [label={[above=-0.1cm]: $x_4$}] {4};
		
		\draw (1) -- (2);
		\draw (2) -- (3);
		\draw (3) -- (4);
		
	\end{tikzpicture}
	\caption{Four agents connected by a line}
	\label{fig:four-line}
	
\end{figure}
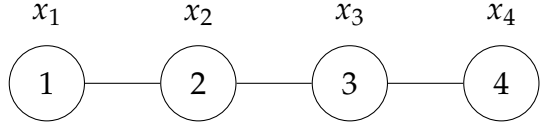

Then $(\mu, x)$ is a stable outcome iff $\mu=\{12, 34\}$, and $x$ satisfies
\[
x_{1} +x_{2}=x_{3}+x_{4} =1, x_{2} +x_{3} \geq1.
\]

In particular, both $(0, 1, 0, 1)$ and $(1, 0, 1, 0)$ are payoff vectors in stable outcomes. This implies that in stable outcomes, the division of payoffs between matched agents may be extremely unfair.

\end{example}

\subsection{Balanced outcome} \label{sec:balanced_outcome}

Before introducing our concept in the next section, we review the concept of balanced outcome proposed by \cite{Rochford84} and \cite{CookYamagishi92}, which selects, among stable outcomes, outcomes in which the divisions of surplus among matched agents are more balanced. Its definition extends the Nash bargaining solution to bilateral trading networks.

Suppose buyer $i$ and seller $j$ bargain over their surplus $v_{ij}$, with outside options $d_{i}$ and $d_{j}$, respectively, where $d_{i} + d_{j} \leq v_{ij}$. The Nash bargaining solution (\citealp{Nash50}) between them, denoted by $NBS(v_{ij}; d_{i}, d_{j})$, is a pair $(x_{i}, x_{j})$ such that \footnote{In our setting of linear utilities, the Nash bargaining solution coincides with the intuitive "split-the-difference" rule, which has been applied since ancient times; see, e.g., Aristotle's discussion on justice (\citealp{Aristotle04}, Book V, Ch. 4) and Aumann and Maschler's discussion on contested claims (\citealp{AumannMaschler85}).}

\begin{align*}
	x_{i}  &  =d_{i} + \frac{1}{2} (v_{ij}-d_{i} -d_{j});\\
	x_{j}  &  =d_{j} + \frac{1}{2} (v_{ij}-d_{i} -d_{j}).
\end{align*}

A balanced outcome is a stable outcome $(\mu,x)$ such that for each matched pair $ij \in \mu$, their payoffs $x_{i}$ and $x_{j}$ are the division of $v_{ij}$ according to the Nash bargaining solution, with the outside option $d_k$ of each agent $k \in \{i, j\}$ being her largest benefit from deviation.

\begin{definition} [\citealp{Rochford84}; \citealp{CookYamagishi92}]
	A stable outcome $(\mu, x)$ is a \textbf{balanced outcome} if for every $ij \in \mu$, $(x_i, x_j)=NBS(v_{ij}; d_i, d_j)$, where for each $k\in \{i, j\}$, 
	\[d_k=\max\left\{0, \max_{kl \in G_{-ij}} (v_{kl}-x_l)\right\}.\]
\end{definition}

By definition, as long as $(\mu, x)$ is stable, the sum of $i$ and $j$'s outside options, $d_i + d_j$ is no more than $v_{ij}$.

An equivalent concept, called symmetrically pairwise-bargained (SPB) allocation, was proposed earlier by \cite{Rochford84} for assignment games. \cite{Rochford84} characterizes the set of SPB allocations as the intersection of the kernel (\citealp{Davis-Maschler65}) and the core (i.e., the set of stable outcomes) and also as the set of fixed points of a rebargaining process. Later, \cite{Driessen98} shows that for assignment games, the kernel is actually a subset of the core. Therefore, the set of SPB allocations (i.e., balanced outcomes) coincides with the kernel.\footnote{\cite{RothSoto88} show that the set of SPB allocations is a lattice. See also \cite{KleinbergTardos08} for a computational characterization of balanced outcomes.}

\begin{example} \label{ex:balanced-outcome}
Let's revisit the market in Example \ref{ex:stability}, where four agents are linked by a line. In this market, an outcome $(\mu, x)$ is a balanced outcome if and only if $\mu=\{12, 34\}$ and $x=(1/3,2/3,2/3,1/3)$.

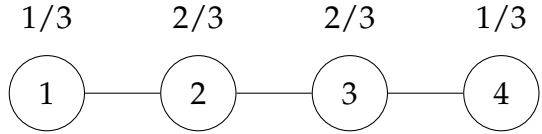
\begin{figure}[h]
\centering
\begin{tikzpicture}[
		every node/.style={circle, draw, minimum size=1cm},
		every edge/.style={draw, thick, ->}
		]
		
		\node (1) at (0,0) [label={[above=-0.1cm]: 1/3}] {1};
		\node (2) [right=of 1] [label={[above=-0.1cm]: 2/3}] {2};
		\node (3) [right=of 2] [label={[above=-0.1cm]: 2/3}] {3};
		\node (4) [right=of 3] [label={[above=-0.1cm]: 1/3}] {4};
		
		\draw (1) -- (2);
		\draw (2) -- (3);
		\draw (3) -- (4);
		
	\end{tikzpicture}
	\caption{Balanced outcome payoffs}
	\label{fig:balanced-outcome}
\end{figure}

When agents $1$ and $2$ bargain, by definition of the balanced outcome, buyer $1$'s outside option (i.e., her maximal benefit from deviation) is $0$, while seller $2$'s outside option is $1-x_3=\frac{1}{3}$. With these outside options, $NBS(v_{12}; d_1, d_2)=(x_1, x_2)$. Likewise, $ NBS(v_{34}; d_3, d_4)=(x_3, x_4)$.

\end{example}

In the example above, when seller $2$ threatens buyer $1$ by claiming that she will receive $1/3$ if she deviates to match with buyer $3,$ buyer $1$ should find this threat \textbf{non-credible}. This is because if seller $2$ indeed breaks up with buyer $1$, she (and buyer $1$) will face the submarket $(G_{-12}, v)$, in which the only stable outcome payoff for seller $2$ is $0$. 

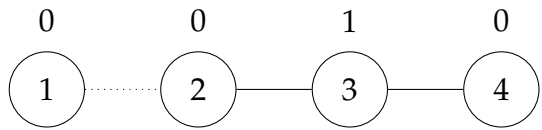
\begin{figure}[h]
	\centering
	\begin{tikzpicture}[
		every node/.style={circle, draw, minimum size=1cm},
		every edge/.style={draw, thick, ->}
		]
		
		\node (1) at (0,0) [label={[above=-0.1cm]: 0}] {1};
		\node (2) [right=of 1] [label={[above=-0.1cm]: 0}] {2};
		\node (3) [right=of 2] [label={[above=-0.1cm]: 1}] {3};
		\node (4) [right=of 3] [label={[above=-0.1cm]: 0}] {4};
		
		\draw[dotted] (1) -- (2);
		\draw (2) -- (3);
		\draw (3) -- (4);
		
	\end{tikzpicture}
	\caption{The graph $G_{-12}$ and stable outcome payoffs}
	\label{fig:G-12}
\end{figure}

\section{Credible bargaining solution}\label{sec:CBS}

\subsection{Definition}

Fix a market $(G, v)$. Suppose buyer $i$ and seller $j$ are matched and they decide to divide their surplus $v_{ij}$ according to the Nash bargaining solution. Once they agree on their respective outside options, $d_{i}$ and $d_{j}$, they will reach the division $NBS(v_{ij}; d_{i}, d_{j})$. 

However, given that $i$ and $j$ belong to a bilateral trading network, their outside options--which capture what they expect to receive if their bargaining breaks down--depend on their expectations on what is going to happen in the submarket $(G_{-ij}, v)$. We make the following credibility assumption on $i$ and $j$'s expectations on their outside options. 

\begin{definition}
	When a matched pair of agents $i$ and $j$ bargain to divide $v_{ij}$, their outside options $d_{i}$ and $d_{j}$ are called (jointly) \textbf{credible} if they are their respective payoffs in some stable outcome $(\mu^{\prime}, x^{\prime})$ of the submarket $(G_{-ij}, v)$.
\end{definition}

That is, $i$ and $j$ share the point of view that, if their bargaining fails, they will both face the submarket $(G_{-ij}, v)$ and this market will reach a stable outcome. Their outside options, therefore, should be their payoffs in that stable outcome. When there are multiple stable outcomes in $(G_{-ij}, v)$, $i$ and $j$ may perceive any of them to realize, but they need to agree on the same one.

\begin{definition}
An outcome $(\mu, x)$ is a \textbf{credible bargaining solution} of the market $(G, v)$ if

\begin{enumerate}
\item it is stable; and

\item for every $ij \in\mu$, their division $(x_i, x_j)$ is justifiable (by Nash bargaining under credible outside options): there exist credible outside options $d_{i}$ and
$d_{j}$ such that
\[
(x_{i}, x_{j})=NBS(v_{ij}; d_{i}, d_{j}).
\]

\end{enumerate}
\end{definition}

In this definition, the second condition requires that at $(\mu, x)$, the division $(x_{i}, x_{j})$ between any matched pair $i$ and $j$ is justifiable as the Nash bargaining solution under some credible outside options. Since the second condition is imposed separately on matched pairs, it does not ensure that the resulted division across matched pairs satisfy stability, i.e., agents may still have incentives to deviate. At an outcome $(\mu, x)$ that satisfies the second condition, if agent $j$ has a profitable deviation, then this will drive $i$ and $j$ to agree on a new pair of credible outside options $(d'_i, d'_j)$ that are more favorable to $j$, which in turn justifies a new division that increases $j$'s share from bargaining. It is exactly such dynamics that motivates our imposition of the stability requirement, i.e., the first condition, on $(\mu, x)$.

\subsection{Illustrative examples}

In this subsection, we use three simple networks to illustrate the credible bargaining solution. Later, in Example \ref{example_EG}, we present a larger network which contains elements of all these simple networks.

\begin{example}
Let's revisit the market we discussed in Examples \ref{ex:stability} and \ref{ex:balanced-outcome}, where four agents are connected by a line and every link generates one unit of surplus. An outcome $(\mu, x)$ is a credible bargaining solution if and only if $\mu=\{12, 34\}$ and $x=\{1/2, 1/2, 1/2, 1/2\}$. 
	
	\begin{figure}[h]
		\centering
		\begin{tikzpicture}[
			every node/.style={circle, draw, minimum size=1cm},
			every edge/.style={draw, thick, ->}
			]
			
			\node (1) at (0,0) [label={[above=-0.1cm]: 1/2}] {1};
			\node (2) [right=of 1] [label={[above=-0.1cm]: 1/2}] {2};
			\node (3) [right=of 2] [label={[above=-0.1cm]: 1/2}] {3};
			\node (4) [right=of 3] [label={[above=-0.1cm]: 1/2}] {4};
			
			\draw (1) -- (2);
			\draw (2) -- (3);
			\draw (3) -- (4);
			
		\end{tikzpicture}
		
		\caption{Credible bagaining solution payoffs}
		\label{fig:4-line-CBS}

	\end{figure}
	
To see this, consider the bargaining between buyer $1$ and seller $2$. Note that in $G_{-12}$, seller $2$'s only stable payoff is $0$. Therefore, $d_1=d_2=0$ and hence $NBS(v_{12}; d_1, d_2)=(1/2, 1/2)$. The bargaining between $3$ and $4$ can be verified in the same way. Lastly, this outcome is stable.

This market also highlights the difference between the credible bargaining solution and the balanced outcome; the latter predicts the payoff vector $(1/3, 2/3, 2/3, 1/3)$ (see Figure \ref{fig:balanced-outcome}). 
\end{example}

\begin{example} \label{3-line}
	
Suppose $I=\{1, 3\}, J=\{2\}, G=\{12, 23\}$, and $v_{ij}=1$ for all links of $G$. Then $(\mu, x)$ is a credible bargaining solution if and only if it is a stable outcome, which means $\mu=\{12\}$ or $\{23\}$, and $x=\{0, 1, 0\}$.

\begin{figure}[h]
	\centering
	\begin{tikzpicture}[
		every node/.style={circle, draw, minimum size=1cm},
		every edge/.style={draw, thick, ->}
		]
		
		\node (1) at (0,0) [label={[above=-0.1cm]: 0}] {1};
		\node (2) [right=of 1] [label={[above=-0.1cm]: 1}] {2};
		\node (3) [right=of 2] [label={[above=-0.1cm]: 0}] {3};
		
		\draw (1) -- (2);
		\draw (2) -- (3);
		
	\end{tikzpicture}
	\caption{Credible bargaining solution payoffs}
	\label{fig:3-line-CBS}
\end{figure}

	Let's see how the unique stable outcome payoff vector is part of a credible bargaining solution. Suppose $\mu=\{12\}$. The division $(x_1, x_2)=(0, 1)$ can be justified as the Nash bargaining solution between buyer $1$ and seller $2$, given outside options $(d_1, d_2)=(0, 1)$. This pair of outside options are part of the most favorable stable outcome for seller $2$ in the network $G_{-12}$.
	
	If buyer $1$ and seller $2$ instead agree on another pair of credible outside options, $(d'_1, d'_2)=(0, 0)$, then their division will be $(x'_1, x'_2)=(1/2, 1/2)$. Given that the unmatched agent--buyer $3$--is receiving $0$, the induced outcome is not stable: seller $2$ has incentive to rematch with buyer $3$. Such a deviation threat will drive buyer $1$ to accept credible outside options that are more favorable to seller $2$, and in the end, $d_2=1$. 
	
\end{example}

\begin{example}

    Suppose $I=\{1, 3\}, J=\{2, 4\}, G=\{12, 23, 34, 41\}$ and $v_{ij}=1$ for all links of $G$. Then $(\mu, x)$ is a credible bargaining solution if and only if it is a stable outcome, which means $\mu=\{12, 34\}$ or $\{14, 23\}$, and $x=\{\alpha, 1-\alpha, \alpha, 1-\alpha\}$ for some $\alpha \in [0, 1]$.
	
	\begin{figure}[h]
		\centering
		\begin{tikzpicture}[
			every node/.style={circle, draw, minimum size=1cm},
			every edge/.style={draw, thick, ->}
			]
			
			\node (1) at (0,0) [label={[above=-0.1cm]: $\alpha$}] {1};
			\node (2) [right=of 1] [label={[above=-0.1cm]: $1-\alpha$}] {2};
			\node (3) [below=of 2] [label={[below=0.9cm]: $\alpha$}] {3};
			\node (4) [below=of 1] [label={[below=0.9cm]: $1-\alpha$}] {4};
			
			\draw (1) -- (2);
			\draw (2) -- (3);
			\draw (3) -- (4);
			\draw (4) -- (1);
			
		\end{tikzpicture}
		
		\caption{Four agents connected by a square}
		\label{fig:square}
	\end{figure}
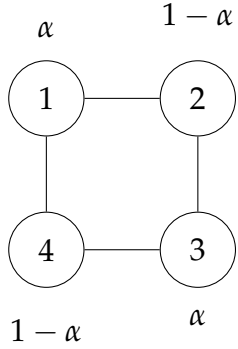
	
	For this market, the set of balanced outcomes and the set of credible bargaining solutions coincide.
\end{example}

\subsection{Discussion on recursivity}

Unlike recursive bargaining solutions, which define outside options through the solution itself applied to submarkets, our concept takes stable outcomes of submarkets as primitives. This avoids recursion and yields a more transparent and tractable notion of credibility.

A solution concept is intended to predict outcomes that are likely to arise in a market. In defining the credible bargaining solution, we impose only mild and standard assumptions on agents’ behavior. Specifically, we assume that bilateral bargaining follows the Nash bargaining solution, and that when bargaining between matched agents $i$ and $j$ breaks down, they expect the submarket $(G_{-ij}, v)$ to resolve at a stable outcome. Importantly, unlike the recursively defined bargaining equilibrium of \cite{Crawford-Rochford86}, our approach does not require agents to anticipate or agree upon the solution concept itself—either in the original market or in the relevant submarkets.

From a practical perspective, imposing recursion on our concept would also lead to severe existence problems. To illustrate, consider the market in Example \ref{3-line}. For an outcome $(\mu, x)$ to be stable, it must be that $\mu=\{12\}$ or $\{23\}$, and $x=(0, 1, 0)$. Take $\mu=\{12\}$ as an example. If one were to define outside options for agents 1 and 2 using the solution of the submarket $(G_{-12}, v)$, then necessarily $d_1=0$ and $d_2=1/2$. These outside options cannot justify the payoff division $(x_1, x_2)=(0,1)$ via Nash bargaining, implying that no recursively defined solution selects from the set of stable outcomes. This example highlights how recursion may render the solution empty, whereas our non-recursive notion of credibility remains well defined.

\section{Results}\label{sec:results}

\subsection{General results}

We first show that if buyer $i$ and seller $j$ are matchable by any optimal matching of the market $(G, v)$, then the sum of any pair of their credible outside options, $d_i$ and $d_j$, is no more than $v_{ij}$. This makes sure that the bargaining problem between them is well-defined. 

\begin{lemma}
	Suppose $ij \in \mu$ for some optimal matching $\mu$ of the market $(G, v)$.  If $(\mu^{\prime},x^{\prime})$ is a stable outcome of the submarket $(G_{-ij},v)$, then
	\[
	x'_{i} + x'_{j} \leq v_{ij}.
	\]
	
\end{lemma}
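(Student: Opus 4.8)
The plan is to evaluate the submarket payoff vector $x'$ against two matchings living inside $G$: the optimal matching $\mu$ of $(G,v)$ and the matching $\mu'$ itself. Write $W:=\sum_{kl\in\mu}v_{kl}$ for the optimal surplus of $(G,v)$. Two elementary observations drive everything. First, since $(\mu',x')$ is a stable outcome of $(G_{-ij},v)$, feasibility gives $\sum_{k\in I\cup J}x'_k=\sum_{kl\in\mu'}v_{kl}$; and because $\mu'\subseteq G_{-ij}\subseteq G$ is in particular a matching in $G$, optimality of $\mu$ yields $\sum_{k\in I\cup J}x'_k=\sum_{kl\in\mu'}v_{kl}\le W$. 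Second, the set $\mu_0:=\mu\setminus\{ij\}$ is a matching contained in $G_{-ij}$ (every link of $\mu$ other than $ij$ lies in $G$ and is distinct from $ij$), so the pairwise-stability inequality of $(\mu',x')$ in the submarket applies to each of its links.

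From here I would sum those pairwise inequalities over $\mu_0$: $\sum_{kl\in\mu_0}(x'_k+x'_l)\ge\sum_{kl\in\mu_0}v_{kl}=W-v_{ij}$. The left-hand side is exactly $\sum_{k}x'_k$ taken over the agents matched by $\mu$ other than $i$ and $j$; adding the nonnegative terms $x'_k\ge0$ for the agents left unmatched by $\mu$ (stability condition 1), this is bounded above by $\sum_{k\in (I\cup J)\setminus\{i,j\}}x'_k$. Hence $\sum_{k\neq i,j}x'_k\ge W-v_{ij}$.

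Combining the two bounds finishes the argument:
\[
x'_i+x'_j=\sum_{k\in I\cup J}x'_k-\sum_{k\neq i,j}x'_k\le W-(W-v_{ij})=v_{ij}.
\]

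I do not anticipate a genuine obstacle here; the only points requiring care are purely bookkeeping: checking that deleting the link $ij$ from $\mu$ leaves a legitimate matching of the submarket $G_{-ij}$ (so that pairwise stability of $x'$ may be invoked link by link), and checking that $\mu'$, although optimal only within $G_{-ij}$, is still a feasible matching of the ambient graph $G$ (so that its surplus is dominated by $W$). Both are immediate from the definitions, and no optimality property of $\mu'$ beyond feasibility is needed.
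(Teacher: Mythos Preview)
Your proof is correct and takes a genuinely different route from the paper's. The paper argues by contradiction: if $x'_i+x'_j>v_{ij}$, then the single missing pairwise constraint of $(G,v)$ is satisfied, so $(\mu',x')$ is stable in $(G,v)$; invoking the Shapley--Shubik fact that any optimal matching can be paired with any stable payoff vector, $(\mu,x')$ is stable in $(G,v)$, forcing $x'_i+x'_j=v_{ij}$, a contradiction. Your argument is instead a direct surplus-accounting: bound $\sum_k x'_k$ from above by $W$ via feasibility of $\mu'$ in $G$, bound $\sum_{k\neq i,j}x'_k$ from below by $W-v_{ij}$ via pairwise stability on $\mu\setminus\{ij\}$, and subtract. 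The paper's version is shorter but leans on the cited structural result about swapping optimal matchings under a fixed stable payoff; yours is more elementary and self-contained, using only the definitions of feasibility, optimality, and stability. Either is fine here.
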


\begin{proof}
	Suppose instead $x_{i}^{\prime}+x_{j}^{\prime}>v_{ij}$. Then since $(\mu', x')$ is a stable outcome of the submarket $(G_{-ij}, v)$, it is also a stable outcome of the market $(G, v)$. Given that $\mu$ is an optimal matching of $(G, v)$, $(\mu, x')$ is also a stable outcome of $(G, v)$, which implies $x'_i + x'_j =v_{ij}$. We have a contradiction.
\end{proof}

The following result shows that the credible bargaining solution, like stable outcome, is more of a solution on the distribution of payoffs than a solution on the matching of agents.

\begin{proposition}
	If $\mu^{\prime}$ is an optimal matching and $x$ is the payoff vector of any credible bargaining solution $(\mu, x)$, then $(\mu^{\prime}, x)$ is also a credible bargaining solution.
\end{proposition}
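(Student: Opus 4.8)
The plan is to check the two defining conditions of a credible bargaining solution for $(\mu',x)$ in turn. Condition~1 (stability of $(\mu',x)$) is immediate from the standard assignment-game facts recalled in Section~\ref{sec:prelim}: since the credible bargaining solution $(\mu,x)$ is in particular stable, $x$ is a stable (core) payoff vector, and pairing any optimal matching with a stable payoff vector again yields a stable outcome, so $(\mu',x)$ is stable. The real content is Condition~2: for every $ij\in\mu'$ I need credible outside options $d_i,d_j$—that is, payoffs of $i$ and $j$ in some stable outcome of $(G_{-ij},v)$—with $(x_i,x_j)=NBS(v_{ij};d_i,d_j)$. The crucial point that makes this tractable is that this requirement involves only $x_i$, $x_j$, $v_{ij}$, and the submarket $(G_{-ij},v)$, and never mentions the matching; so I only have to transfer a matching-independent property from $\mu$ to $\mu'$, pair by pair.

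I would then split the pairs $ij\in\mu'$ according to whether $ij\in\mu$. If $ij\in\mu'\cap\mu$, there is nothing to prove: Condition~2 for the credible bargaining solution $(\mu,x)$ already furnishes credible outside options $d_i,d_j$ with $(x_i,x_j)=NBS(v_{ij};d_i,d_j)$, which is exactly what Condition~2 for $(\mu',x)$ asks on that pair.

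For $ij\in\mu'\setminus\mu$ I would use $\mu$ itself as a witness. Since $(\mu,x)$ is stable, $\mu$ is an optimal matching of $(G,v)$; as it does not use the link $ij$, it is also a matching in $G_{-ij}$, and $(\mu,x)$ is then a stable outcome of the submarket $(G_{-ij},v)$—feasibility $\sum_k x_k=\sum_{kl\in\mu}v_{kl}$ is inherited verbatim, nonnegativity of $x$ is inherited, and $x_k+x_l\ge v_{kl}$ for $kl\in G_{-ij}$ holds because $G_{-ij}\subseteq G$. Hence $(d_i,d_j):=(x_i,x_j)$ is a pair of credible outside options. Since $(\mu',x)$ is stable and $ij\in\mu'$, we have $x_i+x_j=v_{ij}$, so the surplus to be split in $NBS(v_{ij};x_i,x_j)$ is $v_{ij}-x_i-x_j=0$ and the split-the-difference formula returns $(x_i,x_j)$. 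So $(x_i,x_j)$ is justifiable, establishing Condition~2 and hence the proposition.

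I do not anticipate a genuine obstacle: the argument is essentially the observation that the justifiability requirement is matching-free, together with the dichotomy that an optimal matching either already pairs $i$ with $j$—delivering the needed outside options for free—or certifies that link $ij$ is dispensable—so that $i$'s and $j$'s own payoffs are credible. The only thing to be careful about is the routine verification, in the second case, that deleting the unused link $ij$ preserves feasibility and the stability inequalities for $(\mu,x)$. (For reassurance, Lemma~1 ensures $d_i+d_j\le v_{ij}$ for any credible outside options, so the Nash bargaining problem in Condition~2 is always well posed; in the construction above this is anyway visible directly. Note that the case $ij\in\mu'\setminus\mu$ is, implicitly, exactly the case in which $ij$ is not matched in every optimal matching.)
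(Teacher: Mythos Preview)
Your proof is correct and follows essentially the same approach as the paper: both first note that $(\mu',x)$ is stable by standard assignment-game facts, then split the pairs $ij\in\mu'$ according to whether $ij\in\mu$, inheriting the credible outside options from $(\mu,x)$ in the first case and using $(\mu,x)$ itself as a stable outcome of $(G_{-ij},v)$ to supply $(d_i,d_j)=(x_i,x_j)$ in the second.
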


\begin{proof}
	Since $\mu'$ is an optimal matching and $(\mu, x)$ is an stable outcome, we know that $(\mu', x)$ is also a stable outcome. Next, consider any pair of agents $ij$ who are matched by $\mu^{\prime}$. We need to show that the division $(x_i, x_j)$ is justifiable as Nash bargaining solution under credible outside options. We discuss two cases.
	
	\textbf{Case I.} $\mu^{\prime}(i)=\mu(i)$. In this case, $i$ is matched with the same agent under both $\mu$ and $\mu^{\prime}$. Since $(\mu,x)$ is a credible bargaining solution of $(G,v)$, the division $(x_{i},x_{j})$ is justifiable by credible outside options from the market $(G_{-ij},v)$.
	
	\textbf{Case II.} $\mu^{\prime}(i)\neq\mu(i)$. (It is possible that $\mu\left(  i\right)=\emptyset.$) Since $(\mu,x)$ is a stable outcome of $(G,v)$ and the pair $ij$ is unmatched under $\mu,(\mu,x)$ must also be a stable outcome under $(G_{-ij},v)$. In addition, since $(\mu,x)$ is a stable outcome of $(G,v)$ and $\mu^{\prime}$ is an optimal matching of $(G,v)$, $(\mu^{\prime},x)$ is also a stable outcome of $(G,v)$. Since $i$ and $j$ are matched under $\mu^{\prime}$, $x_{i}+x_{j}=v_{ij}$. These two facts imply that for $(\mu^{\prime},x)$ and its matched pair $ij,$ the division $(x_{i},x_{j})$ of $v_{ij}$ is justifiable by the outside options $d_i=x_i$ and $d_j = x_j$, which are $i$ and $j$'s stable payoffs in the stable outcome $( \mu,x)$ of $(G_{-ij},v).$	
\end{proof}

Fix a market $(G, v)$. We call a link $ij \in G$ \textbf{essential} if it is matched by every optimal matching. The following result shows that if $(\mu, x)$ is a stable outcome and $ij \in\mu$ is not essential, then $i$ and $j$'s division $(x_{i}, x_{j})$ of $v_{ij}$ is always justifiable.

\begin{proposition}
	\label{prop_justify_essential}  Suppose $(\mu, x)$ is a stable outcome and $ij \in\mu$. If there exists an optimal matching $\mu^{\prime}$ such that $ij \notin\mu^{\prime}$, then there exist credible outside options $d_{i}$ and $d_{j}$ such that
	\[
	(x_{i}, x_{j})=NBS(v_{ij}; d_{i}, d_{j}).
	\]
	
\end{proposition}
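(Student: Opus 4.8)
The plan is to exploit the fact that $ij$ is matched by $\mu$ but not by some other optimal matching $\mu'$, and to produce the credible outside options by \emph{moving} the payoff vector $x$ itself into the submarket $(G_{-ij},v)$. The natural candidate is $d_i = x_i$ and $d_j = x_j$. These clearly satisfy $(x_i,x_j) = NBS(v_{ij};d_i,d_j)$, since $d_i + d_j = x_i + x_j = v_{ij}$ (the pair is matched in the stable outcome $(\mu,x)$), so the split-the-difference formula returns exactly $(d_i,d_j) = (x_i,x_j)$. So the entire content of the proof is to verify that $(x_i,x_j)$ really \emph{are} credible, i.e. that there is a stable outcome of $(G_{-ij},v)$ in which $i$ gets $x_i$ and $j$ gets $x_j$.

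First I would check that $x$, viewed as a payoff vector on $(G_{-ij},v)$, still satisfies the stability inequalities: $x_k \ge 0$ for all agents, and $x_i + x_j \ge v_{ij}$ for every link $ij \in G_{-ij}$. Both are immediate, since $G_{-ij} \subseteq G$ and $(\mu,x)$ is stable on $(G,v)$; removing a link only drops a constraint. The remaining issue is \emph{feasibility}: I need a matching $\mu''$ on $G_{-ij}$ such that $(\mu'',x)$ is a feasible outcome of the submarket, which forces $\mu''$ to be optimal in $(G_{-ij},v)$ and to match exactly the agents with positive $x$-payoff. The candidate is $\mu' $ itself — it is a matching that does not use the link $ij$, so it is a legitimate matching of $G_{-ij}$. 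The key step is to argue that $\mu'$ is optimal in the submarket and compatible with $x$.

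Here is where I would use the standard assignment-game facts recalled in the preamble. Since $\mu'$ is optimal in $(G,v)$ and $x$ is a stable payoff vector of $(G,v)$, the pair $(\mu',x)$ is a stable outcome of $(G,v)$; in particular $x_i + x_j = v_{ij}$ forces... wait — $ij \notin \mu'$, so instead stability of $(\mu',x)$ gives $x_k = 0$ for every agent unmatched by $\mu'$, and $x_k + x_l = v_{kl}$ for every $kl \in \mu'$. Summing, $\sum_{k} x_k = \sum_{kl\in\mu'} v_{kl} = \nu(G,v)$, the optimal value. On the other hand, removing the link $ij$ cannot increase the optimal value, so $\nu(G_{-ij},v) \le \nu(G,v)$; but $\mu'$ is a matching in $G_{-ij}$ with value $\nu(G,v)$, hence $\nu(G_{-ij},v) = \nu(G,v)$ and $\mu'$ is optimal in the submarket. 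Then $(\mu',x)$ is feasible in $(G_{-ij},v)$ — $\sum_k x_k = \nu(G_{-ij},v) = \sum_{kl\in\mu'} v_{kl}$ — and, together with the two inequalities above, this makes $(\mu',x)$ a stable outcome of $(G_{-ij},v)$. Therefore $d_i = x_i$ and $d_j = x_j$ are credible outside options, and $(x_i,x_j) = NBS(v_{ij};d_i,d_j)$, completing the proof. This is essentially the Case II argument already used in the preceding proposition, repackaged.

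I do not anticipate a serious obstacle; the only mild subtlety is making sure one does not accidentally assume $x_i + x_j = v_{ij}$ from stability of $(\mu',x)$ when in fact $ij \notin \mu'$ — the equality $x_i + x_j = v_{ij}$ that we do need comes from $ij \in \mu$ and stability of $(\mu,x)$, not from $\mu'$. Keeping those two roles of $\mu$ and $\mu'$ straight is the whole ballgame.
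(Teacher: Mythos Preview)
Your proposal is correct and follows exactly the paper's approach: set $d_i = x_i$, $d_j = x_j$, and use the fact that $(\mu',x)$ is a stable outcome of $(G,v)$ and hence of $(G_{-ij},v)$ (since $ij \notin \mu'$) to certify credibility. The paper's proof is terser---it simply invokes the standard fact that any optimal matching paired with any stable payoff vector gives a stable outcome---but your more explicit verification of feasibility and optimality in the submarket is the same argument unpacked.
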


\begin{proof}
	Since $(\mu, x)$ is a stable outcome and $\mu'$ is an optimal matching of $(G, v)$, $(\mu', x)$ is also a stable outcome of $(G, v)$. Also, as $ij \notin \mu'$, $(\mu', x)$ is also a stable outcome of $(G_{-ij}, v)$. Therefore, in the market $(G, v)$, at the matching $\mu$, when $i$ and $j$ bargain to divide $v_{ij}$, there exist credible outside options $d_i =x_i$ and $d_j =x_j$. Since $d_i + d_j =v_{ij}$, $NBS(v_{ij}; d_i, d_j)=(x_i, x_j)$.
\end{proof}

Proposition \ref{prop_justify_essential} is an important result. It tells us that to verify whether a stable outcome $(\mu, x)$ is a credible bargaining solution, we only need to verify the justifiability of divisions at the essential links matched by $\mu$.

\begin{corollary} \label{corollary_essential}
	A stable outcome $(\mu, x)$ is a credible bargaining solution if for all essential links $ij \in \mu$, the division$(x_i, x_j)$ is justifiable as the Nash bargaining solution under some credible outside options.
\end{corollary}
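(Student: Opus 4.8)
The plan is to verify the two defining conditions of a credible bargaining solution directly, reducing the second condition to a case split over the matched links. Since $(\mu,x)$ is assumed to be stable, the first condition holds by hypothesis, so the entire argument reduces to showing that the division $(x_i,x_j)$ is justifiable by Nash bargaining under some credible outside options for \emph{every} $ij\in\mu$, and not merely for the essential ones covered by the corollary's assumption.

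First I would split the links of $\mu$ into those that are essential and those that are not. If $ij\in\mu$ is essential, justifiability of $(x_i,x_j)$ is exactly the assumed hypothesis, so there is nothing to prove. If $ij\in\mu$ is not essential, then by the very definition of an essential link there exists an optimal matching $\mu'$ of $(G,v)$ with $ij\notin\mu'$. This is precisely the situation of Proposition \ref{prop_justify_essential}, which then supplies credible outside options $d_i$ and $d_j$ with $(x_i,x_j)=NBS(v_{ij};d_i,d_j)$.

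Combining the two cases, every matched pair in $\mu$ has a justifiable division, so $(\mu,x)$ satisfies both conditions in the definition and is therefore a credible bargaining solution. The only step requiring any care is the logical unwinding of ``not essential'' into ``omitted by some optimal matching,'' which is what makes Proposition \ref{prop_justify_essential} applicable; beyond that the argument is a direct appeal to the already-established results, so I do not anticipate a genuine obstacle.
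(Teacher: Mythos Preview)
Your proof is correct and follows exactly the intended approach: the paper states Corollary~\ref{corollary_essential} immediately after Proposition~\ref{prop_justify_essential} as a direct consequence, and your case split---hypothesis for essential links, Proposition~\ref{prop_justify_essential} for non-essential ones---is precisely the argument implicit in that placement.
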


\subsection{Existence}
The result below shows that there is always a credible bargaining solution in which every agent receives the average of his/her maximal and minimal stable payoffs in $(G, v)$. 

For any $k\in I\cup J,$ let $\bar{x}_{k}$ and $\underline{x}_{k}$ be agent $k$'s maximal and minimal stable payoffs in $(G,v)$, respectively. 

\begin{theorem} \label{thm_existence}
	Let $\mu$ be any optimal matching of $(G,v)$ and let $x$ be the payoff vector in which
	\[x_k=\frac{\bar{x}_{k}+\underline{x}_k}{2}, \forall k \in I \cup J.\]
	
	Then $(\mu,x)$ is a credible bargaining solution.
\end{theorem}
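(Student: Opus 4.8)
The plan is to check the two defining conditions in turn. Stability of $(\mu,x)$ is essentially free: letting $\hat{u}$ and $\hat{w}$ denote the buyer-optimal and seller-optimal stable payoff vectors of $(G,v)$, each $\bar{x}_k$ is the $k$-th coordinate of $\hat{u}$ or of $\hat{w}$ and each $\underline{x}_k$ that of the other, so $x=\tfrac12(\hat{u}+\hat{w})$ lies in the core by convexity; since $\mu$ is an optimal matching, $(\mu,x)$ is a stable outcome. By Corollary \ref{corollary_essential} it then remains only to justify, for each essential link $ij\in\mu$, the division $(x_i,x_j)$ under some credible outside options — and the argument I give below works verbatim for every matched link $ij\in\mu$, so non-essential ones need no separate handling (they are anyway covered by Proposition \ref{prop_justify_essential}).

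Fix $ij\in\mu$. Since $(\mu,x)$ is stable, $x_i+x_j=v_{ij}$, so by the split-the-difference form of the Nash bargaining solution it suffices to produce a stable outcome $(\mu',x')$ of the submarket $(G_{-ij},v)$ with $x'_i-x'_j=x_i-x_j$: then $d_i:=x'_i$ and $d_j:=x'_j$ are credible outside options by definition, they satisfy $d_i+d_j\le v_{ij}$ by Lemma 1 (as $ij$ lies in the optimal matching $\mu$), and hence $NBS(v_{ij};d_i,d_j)=(x_i,x_j)$. The crux is therefore the scalar identity
\[
x_i-x_j=\bar{x}^{-ij}_i-\bar{x}^{-ij}_j,
\]
where $\bar{x}^{-ij}_k$ is the maximal stable payoff of $k$ in $(G_{-ij},v)$. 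I would prove it from the marginal-contribution characterization of maximal core payoffs in assignment games (\citealp{ShapleyShubik72}; \citealp{RothSoto90}): an agent's maximal stable payoff equals the optimal total surplus of the market minus the optimal total surplus of the market with all of that agent's links deleted. Writing $\nu$ for optimal total surplus, and using that deleting all of buyer $i$'s links from $G$ or from $G_{-ij}$ gives the same graph $G_{-i}$ (and similarly for $j$), the right-hand side equals $\nu(G_{-j})-\nu(G_{-i})$; for the left-hand side, $ij\in\mu$ forces $\hat{u}_i+\hat{u}_j=\hat{w}_i+\hat{w}_j=v_{ij}$, so $x_i-x_j=\hat{u}_i+\hat{w}_i-v_{ij}$, and substituting $\hat{u}_i=\bar{x}_i=\nu(G)-\nu(G_{-i})$ and $\hat{w}_i=v_{ij}-\hat{w}_j=v_{ij}-\bigl(\nu(G)-\nu(G_{-j})\bigr)$ reduces this to the same $\nu(G_{-j})-\nu(G_{-i})$.

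Granting the identity, the stable $x'$ I need is produced by an intermediate-value argument on the nonempty convex core of $(G_{-ij},v)$: the affine functional $h(x')=(x'_j-x'_i)+\bigl(\bar{x}^{-ij}_i-\bar{x}^{-ij}_j\bigr)$ is $\le 0$ at the buyer-optimal stable payoff (where $x'_i=\bar{x}^{-ij}_i$ and $x'_j\le\bar{x}^{-ij}_j$) and $\ge 0$ at the seller-optimal one (where $x'_j=\bar{x}^{-ij}_j$ and $x'_i\le\bar{x}^{-ij}_i$), hence vanishes at some $x'$ on the segment between them; the common value $\delta:=\bar{x}^{-ij}_i-x'_i=\bar{x}^{-ij}_j-x'_j$ is then $\ge 0$, and $d_i=x'_i\ge 0$, $d_j=x'_j\ge 0$ automatically, with $d_i-d_j=x_i-x_j$ by the identity, so $(d_i,d_j)$ justifies $(x_i,x_j)$. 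I expect the two non-routine points to be the scalar identity — where the marginal-contribution formula does the real work and where the choice $x_k=\tfrac12(\bar{x}_k+\underline{x}_k)$ is exactly what is needed — and the insistence that both outside options be read off \emph{one and the same} stable outcome of the submarket, which is precisely what the convexity/intermediate-value step delivers.
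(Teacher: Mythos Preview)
Your proof is correct and follows essentially the same approach as the paper: both arguments hinge on the marginal-contribution identity $x_i-x_j=\bar{y}_i-\bar{y}_j$ (your $\bar{x}^{-ij}_k$ is the paper's $\bar{y}_k$) and then use convexity of the submarket's core to locate a single stable outcome with the right difference in payoffs. The only cosmetic difference is that the paper solves explicitly for the convex-combination weight $\alpha$ in its Lemma~\ref{lemma_compromise}, whereas you phrase the same step as an intermediate-value argument on the affine functional $h$; the content is identical.
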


For $ij \in \mu$, when $i$ and $j$ bargain to divide $v_{ij}$, $i$ wishes to set $d_i=\bar{y}_i$ and $j$ wishes to set $d_j=\bar{y}_j$, where $\bar{y}_i$ and $\bar{y}_j$ are their respective maximal stable payoffs in the submarket $(G_{-ij}, v)$. However, such outside options are likely not credible because they may not be jointly feasible in any stable outcome of $(G_{-ij}, v)$.\footnote{As an example, suppose $ij$ is a link in a square with unit surplus.}

Nonetheless, in Lemma \ref{lemma_compromise}, we show that there exists $\delta\geq0 $ such that $d_{i}=\bar{y}_{i} - \delta$ and $d_{j} =\bar{y}_{j} - \delta$ are credible outside options. We call such $d_i$ and $d_j$ outside options with symmetric compromise. The key idea of the proof of Theorem \ref{thm_existence} is that if $i$ and $j$ divide $v_{ij}$ according to the Nash bargaining solution, then under outside options with symmetric compromise, they each receives the average of their maximal and minimal stable payoffs in $(G, v)$. 

We call the solution presented in Theorem \ref{thm_existence} the \textbf{credible bargaining solution with symmetric compromise}. This solution has appeared in \cite{Thompson81}, \cite{NunezRafels02}, \cite{NunezRafels08}, and \cite{Elliott15}, among others.

\begin{proof} [Proof of Theorem \ref{thm_existence}]
	Suppose $\mu$ is an optimal matching of $(G, v)$. First, note that since $x$ is the average of the payoff vectors of the buyer-optimal stable outcome and the seller-optimal stable outcome, $(\mu, x)$ is also a stable outcome. Next, for every $ij \in\mu$, we construct credible outside options $d_i$ and $d_j$ such that if $i$ and $j$ divide $v_{ij}$ according to the Nash bargaining solution, they will each receive the average of their maximal and minimal stable payoffs in $(G, v)$.
	
	For every $k \in I \cup J$, let $\bar{y}_{k}$ and $\underline{y}_k$ be $k$'s maximal stable payoff and minimal stable payoffs in the submarket $(G_{-ij}, v)$, respectively. When buyer $i$ and seller $j$ bargain to divide $v_{ij}$, $i$ wishes to set $d_{i}=\bar{y}_{i}$ and $j$ wishes to set $d_{j}=\bar{y}_{j}$. However, such outside options are likely not credible because they may not be jointly feasible in any stable outcome of $(G_{-ij}, v)$.
	
	\begin{lemma} \label{lemma_compromise}
		There exists $\delta >0$ such that $d_i=\bar{y}_i - \delta$ and $d_j=\bar{y}_j -\delta$ are credible outside options.
	\end{lemma}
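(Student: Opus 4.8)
The plan is to construct the required $\delta$ by sliding along a line segment inside the core of the submarket $(G_{-ij},v)$: the segment joining a stable outcome in which buyer $i$ is best off to one in which seller $j$ is best off, stopping at the point where the two agents' shortfalls from their respective maxima coincide. First I would fix, using only the definitions of $\bar y_i$ and $\bar y_j$, a stable outcome $(\nu_1,p)$ of $(G_{-ij},v)$ with $p_i=\bar y_i$ and a stable outcome $(\nu_2,q)$ with $q_j=\bar y_j$ (concretely, the buyer-optimal and the seller-optimal stable outcomes of $(G_{-ij},v)$ will do). Since the set of payoff vectors of stable outcomes of the assignment game $(G_{-ij},v)$ is its core---a convex polytope, all of whose points sum to the optimal value of $(G_{-ij},v)$---the interpolated vector $r^{t}:=(1-t)p+tq$ is, for every $t\in[0,1]$, again a stable payoff vector of the submarket, and $(\nu,r^{t})$ is a stable outcome of $(G_{-ij},v)$ for any optimal matching $\nu$ of $(G_{-ij},v)$.

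Next I would record the two monotonicities that drive the argument: along this family, $r^{t}_{i}=\bar y_i-t(\bar y_i-q_i)$ is nonincreasing in $t$ (because $q_i\le\bar y_i$, $\bar y_i$ being $i$'s maximal stable payoff in the submarket), while $r^{t}_{j}=p_j+t(\bar y_j-p_j)$ is nondecreasing in $t$ (because $p_j\le\bar y_j$). Consider then the affine function $h(t):=(\bar y_i-r^{t}_{i})-(\bar y_j-r^{t}_{j})$; it satisfies $h(0)=-(\bar y_j-p_j)\le 0$ and $h(1)=\bar y_i-q_i\ge 0$, so by the intermediate value theorem there is $t^{*}\in[0,1]$ with $h(t^{*})=0$. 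I would set $\delta:=\bar y_i-r^{t^{*}}_{i}=\bar y_j-r^{t^{*}}_{j}\ge 0$, where equality of the two expressions is exactly $h(t^{*})=0$ and nonnegativity holds since $r^{t^{*}}_{i}\le\bar y_i$, $r^{t^{*}}_{j}\le\bar y_j$. Then $d_i:=\bar y_i-\delta=r^{t^{*}}_{i}$ and $d_j:=\bar y_j-\delta=r^{t^{*}}_{j}$ are precisely the payoffs of $i$ and $j$ in the stable outcome $(\nu,r^{t^{*}})$ of $(G_{-ij},v)$, hence credible outside options, which is the claim. (The construction gives $\delta\ge 0$ rather than $\delta>0$; $\delta=0$ occurs exactly when $(\bar y_i,\bar y_j)$ is itself realized by a stable outcome of the submarket. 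That in addition $d_i+d_j\le v_{ij}$, so $NBS(v_{ij};d_i,d_j)$ is well defined, follows from the first lemma, since $ij$ lies in the optimal matching $\mu$ of $(G,v)$---but this is not needed for the credibility statement itself.)

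The substantive ingredients are just the convexity of the core of the submarket and the two monotonicities of $t\mapsto r^{t}_{i}$ and $t\mapsto r^{t}_{j}$; given those, the intermediate-value step and the final identification are routine. The step I expect to need the most care---and would write out carefully---is verifying that the interpolant $r^{t}$ is genuinely a stable outcome of the \emph{submarket}: that it stays feasible (it sums to the optimal value of $(G_{-ij},v)$ because $p$ and $q$ do, their matchings being optimal in $(G_{-ij},v)$) and still satisfies every stability inequality of $(G_{-ij},v)$, and that $\bar y_i$, $\bar y_j$, and the matching used to assemble $(\nu,r^{t})$ are all taken in $(G_{-ij},v)$ rather than in $(G,v)$. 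No lattice structure of the core beyond plain convexity is used.
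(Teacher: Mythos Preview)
Your proposal is correct and follows essentially the same approach as the paper's proof: both take the convex combination of the buyer-optimal and seller-optimal stable payoff vectors of $(G_{-ij},v)$ and choose the mixing weight so that $i$ and $j$ fall short of their respective maxima by the same amount $\delta$. The paper solves for the weight $\alpha$ explicitly via the ratio $(1-\alpha)/\alpha=(\bar y_j-\underline y_j)/(\bar y_i-\underline y_i)$, whereas you reach the same point via the intermediate value theorem applied to $h(t)$; these are equivalent, and your observation that the construction yields $\delta\ge 0$ (with $\delta=0$ precisely when $(\bar y_i,\bar y_j)$ is already jointly realizable) matches the paper's own case split.
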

	
	\begin{proof}
		If either $\bar{y}_{i}=\underline{y}_{i}$ or $\bar{y}_{j}=\underline{y}_{j},$ then $\bar{y}_i$ and $\bar{y}_j$ are credible outside options. Therefore, we can simply set $\delta=0$. Otherwise, set the weight $\alpha\in\lbrack0,1]$ to satisfy
		\[
		\frac{1-\alpha}{\alpha}=\frac{\bar{y}_{j}-\underline{y}_{j}}{\bar{y}_{i}-\underline{y}_{i}},
		\]
		
		and set
		\[
		\delta=(1-\alpha)(\bar{y}_{i}-\underline{y}_{i})=\alpha(\bar{y}_{j}-\underline{y}_{j}).
		\]
		
		Then
		\[
		\left(
		\begin{array}
			[c]{c}%
			\bar{y}_{i}-\delta\\
			\bar{y}_{j}-\delta
		\end{array}
		\right)  =\alpha\left(
		\begin{array}
			[c]{c}%
			\bar{y}_{i}\\
			\underline{y}_{j}
		\end{array}
		\right)  +\left(  1-\alpha\right)  \left(
		\begin{array}
			[c]{c}
			\underline{y}_{i}\\
			\bar{y}_{j}%
		\end{array}
		\right)
		\]
		
		Since the payoff vectors of the stable outcomes of $(G_{-ij}, v)$ consist a convex polytope, we know that $\bar{y}_{i}-\delta$ and $\bar{y}_{j}-\delta$ are $i$ and $j$'s stable payoffs at some stable outcome in $(G_{-ij}, v)$, respectively.
	\end{proof}

	Note that at $(d_{i}, d_{j})=(\bar{y}_{i} - \delta, \bar{y}_{j} - \delta)$, since $d_{i}-d_{j}=\bar{y}_{i} - \bar{y}_{j}$, the difference between $i$ and $j$'s outside options is independent of the choice of $\delta$.
	
	With these outside options, when $i$ and $j$ bargaing to divide $v_{ij},$
	their Nash bargaining solution payoffs are, respectively,%
	\begin{align}
		x_{i}  &  =\bar{y}_{i}-\delta+\frac{v_{ij}-(\bar{y}_{i}-\delta)-(\bar{y}_{j}-\delta)}{2}
		=\bar{y}_i + \frac{1}{2}(v_{ij}-\bar{y}_{i}-\bar{y}_{j}), \label{eqn_division_x}\\
		x_{j}  &  =\bar{y}_{j}-\delta+\frac{v_{ij}-(\bar{y}_{i}-\delta)-(\bar{y}_{j}-\delta)}{2} 
		=\bar{y}_j + \frac{1}{2}(v_{ij}-\bar{y}_{j}-\bar{y}_{i}).  \label{eqn_division_y}
	\end{align}

	Note that $\bar{x}_{i},\bar{x}_{j}$ are $i,j$'s marginal contributions in $(G, v)$, respectively, and $\bar{y}_i, \bar{y}_j$ are $i,j$'s marginal contributions in $(G_{-ij}, v)$, respectively.\footnote{See, e.g., Lemma 8.15 of \cite{RothSoto90}.} Therefore,
	\begin{align*}
		\bar{x}_{i}  &  =v_{G}-v_{G_{-i}},\bar{x}_{j}=v_{G}-v_{G_{-j}}, \\  
		\bar{y}_{i}  &  =v_{G_{-ij}}-v_{G_{-i}},\bar{y}_{j}=v_{G_{-ij}}-v_{G_{-j}}.
	\end{align*}

	As a result,
	\[
	\bar{y}_{i}-\bar{y}_{j}=\bar{x}_{i}-\bar{x}_{j}=v_{G_{-j}}-v_{G_{-i}}.
	\]
	
	Together with Equations \ref{eqn_division_x} and \ref{eqn_division_y}, we know that at the Nash bargaining solution $(x_i, x_j)$ under outside options $(d_i, d_j)=(\bar{y}_i -\delta, \bar{y_j}- \delta)$,
	\[x_i - x_j =\bar{y}_i - \bar{y}_j=\bar{x}_i - \bar{x}_j.\]
	
	Since $i$ and $j$ are dividing the surplus $v_{ij}$, $x_i +x_j =v_{ij}$. Therefore, 
	\begin{align}
		x_{i}  &  =\bar{x}_i + \frac{1}{2}(v_{ij}-\bar{x}_{i}-\bar{x}_{j}), \label{eqn_sol_x} \\  
		x_{j}  &  =\bar{x}_j + \frac{1}{2}(v_{ij}-\bar{x}_{j}-\bar{x}_{i}). \label{eqn_sol_y}
	\end{align}
	
	It is well known that at both the buyer-optimal stable outcome and the seller-optimal stable outcome of $(G, v)$, $i$ and $j$'s payoffs divide $v_{ij}$. That is, 
	\[\bar{x}_{i}+\underline{x}_{j}=v_{ij}=\bar{x}_{j}+\underline{x}_{i}.\]
	
	Plug into Equations \ref{eqn_sol_x} and \ref{eqn_sol_y}, we obtain that the Nash bargaining solution between $i$ and $j$ under outside options $(d_{i}, d_{j})=(\bar{y}_{i} - \delta, \bar{y}_{j} - \delta)$ is
	\begin{align*}
		x_{i}  &  =\frac{1}{2}(\bar{x}_{i}+\underline{x}_{i}),\\
		x_{j}  &  =\frac{1}{2}(\bar{x}_{j}+\underline{x}_{j}).
	\end{align*}
	
	We have therefore completed this proof.
\end{proof}

\subsection{Unit-surplus case}

We say that the market $(G, v)$ is a market with unit-surplus if for all $ij \in G, v_{ij}=1$. Recall that a link $ij \in G$ is called \textbf{essential} if it is matched by every optimal matching.

\begin{theorem}\label{thm_characterization}
	Suppose $v_{ij}=1, \forall ij \in G$. An outcome $(\mu, x)$ is a credible bargaining solution of the market $(G, v)$ if and only if it is stable and for all $ij \in \mu$ that is essential, $x_i=x_j=1/2$.
\end{theorem}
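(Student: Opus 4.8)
The plan is to route both directions through a single combinatorial fact about essential links that is special to the unit-surplus case: if $ij$ is essential and $ij\in\mu$ for the (optimal) matching $\mu$ of the outcome, then $\mu\setminus\{ij\}$ is an optimal matching of the submarket $(G_{-ij},v)$. To see this, observe that when all valuations equal $1$ the total surplus of a matching equals its cardinality, so an optimal matching is precisely a maximum-cardinality matching. If $\mu\setminus\{ij\}$ were not maximum in $G_{-ij}$, there would be a matching $\mu''\subseteq G_{-ij}$ with $|\mu''|\geq|\mu|$; but $\mu''$ is also a matching of $G$, hence a maximum matching of $G$, and it does not contain $ij$ — contradicting that $ij$ lies in every optimal matching. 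Since $\mu\setminus\{ij\}$ leaves both $i$ and $j$ unmatched, and since pairing any stable-outcome payoff vector of $(G_{-ij},v)$ with any optimal matching of $(G_{-ij},v)$ again gives a stable outcome in which every unmatched agent is paid $0$, it follows that in \emph{every} stable outcome $(\mu',x')$ of $(G_{-ij},v)$ we have $x'_i=x'_j=0$. Equivalently, the only credible pair of outside options for $i$ and $j$ is $(0,0)$.

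Granting this, the ``if'' direction is short. Suppose $(\mu,x)$ is stable and $x_i=x_j=1/2$ for every essential $ij\in\mu$. By Corollary~\ref{corollary_essential} it suffices to exhibit, for each such link, credible outside options justifying the division $(x_i,x_j)$. Take $d_i=d_j=0$: these are credible because $(G_{-ij},v)$ has some stable outcome (nonemptiness of the core, \citealp{ShapleyShubik72}), and re-pairing its payoff vector with the optimal matching $\mu\setminus\{ij\}$ shows $i$ and $j$ are paid $0$ there. Then $NBS(v_{ij};0,0)=(1/2,1/2)=(x_i,x_j)$, so the division is justifiable and $(\mu,x)$ is a credible bargaining solution.

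For the ``only if'' direction, suppose $(\mu,x)$ is a credible bargaining solution; it is stable by definition, and $\mu$ is optimal, so every essential link is in $\mu$. Fix an essential link $ij\in\mu$. Condition~2 in the definition of the credible bargaining solution gives credible outside options $(d_i,d_j)$ with $(x_i,x_j)=NBS(v_{ij};d_i,d_j)$, witnessed by a stable outcome $(\mu',x')$ of $(G_{-ij},v)$ with $x'_i=d_i$ and $x'_j=d_j$. By the combinatorial fact above, $x'_i=x'_j=0$, hence $d_i=d_j=0$, and therefore $x_i=x_j=1/2$. This completes the characterization.

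The step I expect to be the genuine content — as opposed to bookkeeping — is the combinatorial fact of the first paragraph: that essentiality of $ij$ forces $\mu\setminus\{ij\}$ to be optimal in $G_{-ij}$ (equivalently, that removing $ij$ drops the maximum matching size by exactly one while leaving $i$ and $j$ simultaneously uncovered), which is what collapses the set of credible outside options to the single point $(0,0)$. This is special to the unit-surplus regime; with general valuations an essential link may admit a continuum of credible outside options. Everything else reuses facts already established: matchings in stable outcomes are optimal, any optimal matching can be combined with any stable payoff vector, unmatched agents receive $0$, and Corollary~\ref{corollary_essential}.
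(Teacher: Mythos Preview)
Your proof is correct and follows essentially the same route as the paper's: both reduce to essential links via Corollary~\ref{corollary_essential}, establish that $\mu\setminus\{ij\}$ is optimal in $(G_{-ij},v)$ (you via a cardinality contradiction, the paper via the equivalent observation that removing an essential link strictly lowers total surplus by exactly one), and conclude that the only credible outside options are $(0,0)$. The arguments are the same up to phrasing.
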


The proof of Theorem \ref{thm_characterization} crucially relies on the fact that under unit-surplus, if $ij \in \mu$ is an essential link of $(G, v)$, then there is an optimal matching in $(G_{-ij}, v)$ which leaves both $i$ and $j$ unmatched. This fact implies that in the submarket $(G_{-ij}, v)$, both $i$ and $j$ can only receive zero at all stable outcomes.

\begin{proof}[Proof of Theorem \ref{thm_characterization}]
	
	Let $(\mu, x)$ be any stable outcome of the market $(G, v)$. To prove this theorem, it is sufficient to show that $(\mu, x)$ is a credible bargaining solution if and only if for all $ij \in \mu$ that is essential, $x_i=x_j=1/2$.
	
	Due to Corollary \ref{corollary_essential}, $(\mu, x)$ is a credible bargaining solution if and only if for all essential links $ij \in \mu$, the division $(x_i, x_j)$ is justifiable as the Nash bargaining solution under some credible outside options.
	
	Let $ij \in \mu$ be any essential link of $\mu$. Then the total surplus that can be generated from the submarket $(G_{-ij}, v)$ must be strictly less than that of $(G, v)$. As a result, at the submarket $(G_{-ij}, v)$, the matching $\mu -\{ij\}$ must be optimal, because it generates exactly one less unit of surplus than the total surplus of $(G, v)$. That is, at the market $(G_{-ij}, v)$, there exists an optimal matching that leaves both $i$ and $j$ unmatched. Therefore, in any stable outcome $(\mu', x')$ of $(G_{-ij}, v)$, $x'_i = x'_j =0$.	
	
	As a result, when $i$ and $j$ bargain to divide $v_{ij}$, only $(d_i, d_j)=(0, 0)$ can be a pair of credible outside options, implying that $(x_i, x_j)$ is a justifiable division if and only if $x_i=x_j=1/2$. 
\end{proof}

How to identify the essential links of $(G, v)$? 

The set of nodes in $G$ (i.e., the set of all agents) can be partitioned into under-demanded agents, over-demanded agents, and perfectly matched agents as follows:

\begin{enumerate}
	\item $U$: the set of under-demanded agents in $G$, which consists of agents who are unmatched in at least one optimal matching;
	
	\item $O$: the set of over-demanded agents in $G$, which consists of agent in $(I \cup J) -U$ who are linked to at least one agent in $U$;
	
	\item $P$: the set of perfectly matched agents in $G$, which consists of agents not in $U$ or $O$.\footnote{Our use of terminologies—under-demanded, over-demanded, and perfectly matched agents—follows \cite{Manea16}. Also, the sets $U, O, P$ correspond to the sets $D, A, C$, respectively, in \cite{LovaszPlummer86}.}
\end{enumerate}

Due to the Edmonds-Gallai decomposition theorem (\citealp{LovaszPlummer86}, Theorem 3.2.1), we know that $U$ is an independent set, i.e., no two agents in $U$ are linked. We also know that if $\mu$ is an optimal matching, then it matches every agent in $O$ to a distinct agent in $U$ and it matches agents in $P$ with each other perfectly; moreover, every agent in $U$ is unmatched by some optimal matching. These facts tell us that at any stable outcome $(\mu, x)$, $x_i =0$ for all $i \in U$ and $x_i=1$ for all $i \in O$.

If $ij$ is an essential link of $(G, v)$, it must be that both $i$ and $j$ are in the set $P$. Let's restrict attention to the subgraph consisting of only agents in $P$ and only links among them; denote this subgraph by $G_P$. Call a link between a pair of agents in $P$ forbidden if it is not matched by any optimal matching of $(G, v)$; a link that is not forbidden is called allowed. A network (graph) is called elementary if its allowed links form a connected subgraph of it. 

\begin{lemma}[\citealp{LovaszPlummer86}, Corollary 4.2.10] \label{lemma_Little}
	Every pair of links of an elementary bipartite graph are contained in a (nice) cycle.
\end{lemma}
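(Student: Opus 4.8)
The plan is to split the statement into its two ingredients: \emph{(a)} some cycle contains both given links, and \emph{(b)} such a cycle can be chosen \emph{nice}, i.e.\ so that deleting its vertex set leaves a perfectly matchable subgraph. Part (a) is a soft connectivity fact; part (b) is where the elementary structure is genuinely used, and it is the crux.

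First I would dispose of the degenerate case: an elementary bipartite graph is either a single link $K_2$ (for which the claim is vacuous) or has bipartition $(A,B)$ with $|A|=|B|=n\ge 2$. In the latter case I would invoke Hetyei's characterization (Lov\'asz--Plummer, Thm.\ 4.1.1): $G$ is elementary iff every link is allowed iff the strong Hall condition $|N_G(X)|\ge |X|+1$ holds for every $\varnothing\ne X\subsetneq A$ (and symmetrically on $B$). From this I first record $\delta(G)\ge 2$ and that $G$ is $2$-connected: if $a\in A$ were a cut vertex, every component of $G-a$ would meet both sides (by $\delta(G)\ge 2$), and applying strong Hall to the $A$-part and to the $B$-part of each such component would force these two parts to have equal size, contradicting $\sum_i |A_i| = n-1 \ne n = \sum_i |B_i|$; the case $v\in B$ is symmetric.

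For part (a), take two distinct links $e_1,e_2$ and subdivide each with a new degree-$2$ vertex $w_1,w_2$. Subdivision preserves $2$-connectivity, and in a $2$-connected graph any two vertices lie on a common cycle (two internally disjoint $w_1$--$w_2$ paths via Menger, whose union is a cycle). Since $w_1,w_2$ have degree $2$, that cycle must use both links incident to each, so suppressing $w_1,w_2$ returns a cycle of $G$ through $e_1$ and $e_2$.

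For part (b) I would upgrade to a nice cycle by induction along a bipartite ear decomposition $G=G'+P$, where $P$ is the last ear (an odd path with endpoints in $G'$ and internal vertex set $W$, possibly $W=\varnothing$) and $G'$ is again elementary bipartite. If $e_1,e_2\in E(G')$, induction gives a nice cycle $C\subseteq G'$ through both; then $G-V(C)$ is the disjoint union of $G'-V(C)$, which has a perfect matching by niceness in $G'$, and the internal path on $W$, which has an even number of vertices and hence also a perfect matching — so $C$ is nice in $G$. If one or both of $e_1,e_2$ lie on $P$, I would instead produce in $G'$ a suitable nice path between the two endpoints of $P$ that runs through whichever of $e_1,e_2$ lies in $G'$, and close it up through $P$; this means strengthening the induction hypothesis to also cover "nice path between two prescribed vertices through a prescribed link." The main obstacle is precisely this: once one insists the \emph{complement} of the cycle remain perfectly matchable, merely having a cycle through both links is no longer sufficient, and one must route the cycle using the matching-covered structure that the ear decomposition encodes — handling the "edge on the last ear" cases and the even-cycle base case is where the real work sits.
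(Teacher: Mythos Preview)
The paper does not prove this lemma; it is quoted as Corollary~4.2.10 of Lov\'asz--Plummer and invoked as a black box in the proof of Proposition~\ref{prop_essential}. So there is no in-paper proof to compare against --- only the original source.

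Your outline is the natural one and is in the spirit of Lov\'asz--Plummer's Chapter~4: obtain $2$-connectivity from the strong Hall characterization (hence \emph{some} common cycle via Menger after subdividing the two edges), then upgrade to a \emph{nice} cycle by induction along a bipartite ear decomposition. One small correction in your cut-vertex step: with $a\in A$ a putative cut vertex and $C$ a component of $G-a$ with parts $A_C,B_C$, strong Hall gives $|B_C|\ge |A_C|+1$ (since $N_G(A_C)\subseteq B_C$) and $|A_C|+1\ge |B_C|+1$ (since $N_G(B_C)\subseteq A_C\cup\{a\}$). This is already a contradiction for a single component; you do not get $|A_C|=|B_C|$, and the global count $\sum_i|A_i|\neq\sum_i|B_i|$ is not needed.

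The genuine gap is exactly the one you flag. When a prescribed edge lies on the last ear $P$, the cycle is forced to traverse all of $P$ (its internal vertices have degree~$2$), so in $G'$ you must produce a \emph{nice path} between the two attachment vertices of $P$ through a prescribed edge of $G'$. Carrying the strengthened hypothesis ``for every $a\in A$, $b\in B$ and every edge $e$ there is a nice $a$--$b$ path through $e$'' through an ear induction is not routine: the attachment vertices of the next ear are arbitrary, so you face the same difficulty one level down, and the statement is essentially equivalent in strength to the theorem itself. In Lov\'asz--Plummer, Corollary~4.2.10 is not obtained by a bare ear induction of this shape but is deduced from the preceding structural results of \S4.2; if you want a self-contained argument you will need to reproduce those, or give an independent proof of the nice-path statement before the induction can close.
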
 

Given Lemma \ref{lemma_Little}, it is straightforward to know which links are essential. 

\begin{proposition} \label{prop_essential}
	A link $ij \in G$ is an essential link of $(G, v)$ if and only if it is the only link in an elementary subgraph of $G_P$.
\end{proposition}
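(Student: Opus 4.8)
The plan is to reformulate ``essential'' as ``belonging to every perfect matching of $G_P$'' and then read off the answer from the elementary-component decomposition together with Lemma~\ref{lemma_Little}. First I would record the structural consequence of the Edmonds--Gallai decomposition in the unit-surplus case: a matching is optimal for $(G,v)$ if and only if it matches $P$ perfectly inside $G_P$, matches every agent of $O$ to a distinct agent of $U$, and leaves the remaining agents of $U$ unmatched; moreover a matching of $O$ onto distinct agents of $U$ always exists and can be combined with \emph{any} perfect matching of $G_P$, since $P$ is vertex-disjoint from $O\cup U$. From this it follows that a link $ij$ with $i,j\in P$ is matched by some optimal matching of $(G,v)$ iff it is an allowed link of $G_P$, and is essential iff it lies in every perfect matching of $G_P$. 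Since any essential link has both endpoints in $P$ (noted before the statement), this reduces the proposition to a statement about $G_P$ alone. I would also recall the elementary decomposition: deleting the forbidden links of $G_P$ leaves connected components whose vertex sets partition $V(G_P)$, each such component is elementary, and --- because every matched link is allowed --- every perfect matching of $G_P$ restricts to a perfect matching of each component.

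For the ``if'' direction, suppose $ij$ is the only link of some elementary subgraph $K$ of $G_P$. Then $V(K)=\{i,j\}$, so the only perfect matching of $K$ is $\{ij\}$; hence every perfect matching of $G_P$ contains $ij$, and $ij$ is essential. For the ``only if'' direction, suppose $ij$ is essential and let $K$ be the elementary subgraph of $G_P$ containing $i$; since $ij$ is allowed, $j\in V(K)$ as well. If $K$ contained some link $e\neq ij$, then, $K$ being elementary, Lemma~\ref{lemma_Little} would give a nice cycle $C$ of $K$ passing through both $ij$ and $e$. As $G$ is bipartite, $C$ is an even cycle, hence has exactly two perfect matchings, precisely one of which uses $ij$; let $M_C$ be the other. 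Niceness of $C$ gives a perfect matching $M'$ of $K-V(C)$, so $M_C\cup M'$ is a perfect matching of $K$ avoiding $ij$. Combining it with perfect matchings of the other elementary subgraphs of $G_P$, a matching of $O$ onto distinct agents of $U$, and the leftover unmatched agents of $U$ produces, by the first paragraph, an optimal matching of $(G,v)$ that does not match $ij$ --- contradicting essentiality. Hence $ij$ is the unique link of $K$.

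The hard part will be the bookkeeping in the first paragraph: making rigorous the claim that the optimal matchings of $(G,v)$ are exactly the ``perfect matching of $G_P$'' together with ``$O$-to-$U$ matching'' combinations. This needs the deficiency form of the Edmonds--Gallai structure theorem (in particular that $O$ can always be saturated by edges into $U$, independently of how $P$ is matched) and a check that no optimal matching can gain by using a link incident to $U$ or $O$ in a way that spoils the perfect matching of $P$. Once that correspondence is in place, the elementary-decomposition facts and both directions are routine applications of standard results on elementary components, the only genuinely new ingredient being the alternating-cycle swap in the ``only if'' direction, for which Lemma~\ref{lemma_Little} is exactly tailored.
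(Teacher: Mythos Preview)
Your proposal is correct and follows the same line as the paper's proof: both directions hinge on Lemma~\ref{lemma_Little} applied inside the elementary component of $G_P$ containing $ij$, with the ``only if'' direction obtained by swapping along the (nice) cycle to produce an optimal matching avoiding $ij$. Your write-up is considerably more explicit than the paper's---you spell out the reduction from ``essential in $(G,v)$'' to ``in every perfect matching of $G_P$'' via the Edmonds--Gallai structure, and you make the niceness of the cycle do real work when reassembling the alternative matching---whereas the paper compresses all of this into two sentences and leaves the reader to supply the details; but the underlying argument is the same.
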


\begin{proof}
	If $ij \in \mu$ is in an elementary subgraph of $G_P$ that contains more than one links, then due to Lemma \ref{lemma_Little}, it is part of a cycle and hence there exists another optimal matching that does not contain $ij$. Therefore, $ij$ is not an essential link of $(G, v)$. If $ij \in \mu$ is the only link in an elementary subgraph of $G_P$, then it has to be matched by every optimal matching of $(G, v)$. Therefore, it is an essential link. 
\end{proof}

We use the following example to illustrate the procedure that finds essential links.

\begin{example} \label{example_EG}
	Consider the following market with unit-surplus (See Figure \ref{fig:EG}). There are thirteen agents and they are labeled by the respective set (among $U, O, P$) they belong to in the Edmonds-Gallai decomposition.

	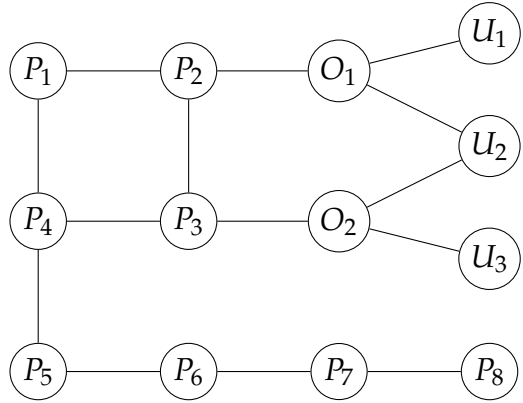
\begin{figure}[htbp]
		\centering
		\begin{tikzpicture}[
			node/.style={circle, draw, inner sep=2pt, minimum size=6mm},
			enclosure/.style={rounded corners, draw, dotted, thick}
			]	
			\begin{scope}[rotate=90]

				\node[node] (5) at (0,0)  {$P_5$};
				\node[node] (4)  at (2,0)  {$P_4$};
				\node[node] (1)  at (4,0)  {$P_1$};
				\node[node] (6) at (0,-2) {$P_6$};
				\node[node] (3) at (2,-2) {$P_3$};
				\node[node] (2) at (4,-2) {$P_2$};
				\node[node] (7) at (0,-4) {$P_7$};
				
				\node[node] (10) at (2,-4) {$O_2$};
				\node[node] (9) at (4,-4) {$O_1$};
				
				\node[node] (8) at (0,-6) {$P_8$};
				\node[node] (13) at (1.5,-6) {$U_3$};
				\node[node] (12) at (3,-6) {$U_2$};
				\node[node] (11) at (4.5,-6) {$U_1$};
				
				\draw (5) -- (4);
				\draw (4) -- (1);
				
				\draw (5) -- (6);
				\draw (4)  -- (3);
				\draw (1)  -- (2);
				
				
				\draw (3) -- (2);
				
				\draw (6) -- (7);
				\draw (7) -- (8);
				
				\draw (3) -- (10);
				\draw (2) -- (9);
				
				\draw (10) -- (13);
				
				\draw (9) -- (11);
				\draw (12) -- (10);
				\draw (9) -- (12);

			\end{scope}
			
		\end{tikzpicture}
		\caption{Edmonds-Gallai decomposition}
		\label{fig:EG}
	\end{figure}
	
Restricting attention to $G_P$ (the subgraph which consists of only agents in $P$), there are two forbidden links, namely $P_4P_5$ and $P_6P_7$. After removing them, we obtain three elementary subgraphs (see Figure \ref{fig:Elementary}).

Each of $P_5P_6$ and $P_7P_8$ is the only link in an elementary subgraph. Therefore, due to Proposition \ref{prop_essential}, both of them are essential links. As a result, from Theorem \ref{thm_characterization}, we know that an outcome $(\mu, x)$ is a credible bargaining solution of this market if and only if it is stable and agents $P_5, P_6, P_7$ and $P_8$ all receive $1/2$.
	
		\begin{figure}[htbp]
		\centering
		\begin{tikzpicture}[
			node/.style={circle, draw, inner sep=2pt, minimum size=6mm},
			enclosure/.style={rounded corners, draw, dotted, thick}
			]	
			\begin{scope}[rotate=90]

				\node[node] (5) at (0,0)  {$P_5$};
				\node[node] (4)  at (2,0)  {$P_4$};
				\node[node] (1)  at (4,0)  {$P_1$};
				\node[node] (6) at (0,-2) {$P_6$};
				\node[node] (3) at (2,-2) {$P_3$};
				\node[node] (2) at (4,-2) {$P_2$};
				\node[node] (7) at (0,-4) {$P_7$};
                \node[node] (8) at (0,-6) {$P_8$};
				
				\draw (4) -- (1);
				
				\draw (5) -- (6);
				\draw (4)  -- (3);
				\draw (1)  -- (2);
				
				
				\draw (3) -- (2);
				\draw (7) -- (8);

			\end{scope}
			
		\end{tikzpicture}
		\caption{Elementary subgraphs of $G_P$}
		\label{fig:Elementary}
	\end{figure}
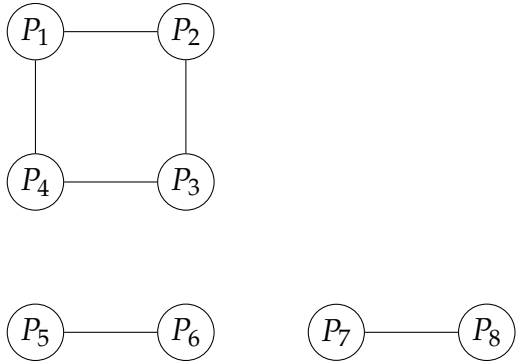

\end{example}

\begin{remark}
	Fix an Edmonds–Gallai decomposition of $G$. Most cooperative solution concepts and noncooperative bargaining models predict that all agents in $U$ receive payoff $0$ and all agents in $O$ receive payoff $1$, while their predictions may differ for agents in $P$. For these perfectly matched agents, beyond the requirements imposed by stability, the credible bargaining solution only pins down the payoffs of agents who are connected by essential links, requiring each such agent to receive $1/2$.
	
	In comparison, under the dynamic bargaining process proposed by \cite{Corominas04}, there exists a subgame-perfect equilibrium in which all agents in $P$ receive $1/2$ in the limit as the discount factor $\delta \rightarrow 1$. \cite{Polanski07} obtains similar predictions in his equilibrium analysis of the dynamic bargaining process he proposes. By contrast, under decentralized bargaining protocols, \cite{AbreuManea12GEB} show that Markov perfect equilibria may yield highly asymmetric payoffs even among perfectly matched agents (see Figure 3 in \cite{AbreuManea12GEB} for example and \cite{Manea16} for a comprehensive review on noncooperative bargaining).
	  
\end{remark}

\section{Conclusion}\label{sec:concl}

This paper studies surplus division in network-constrained matching markets by introducing a new bargaining solution based on credible outside options. Building on the classical notion of stability, our approach selects among stable outcomes by requiring that surplus within each matched buyer–seller pair be justifiable as the outcome of Nash bargaining with outside options that arise from stable outcomes of appropriately defined submarkets. This additional credibility requirement disciplines surplus division without imposing an explicit noncooperative bargaining protocol and yields sharp implications for equilibrium payoffs, particularly on essential links.

We establish general properties of the credible bargaining solution, including existence, and provide a complete characterization in the unit-surplus case. Our analysis highlights the central role of essential links in determining bargaining outcomes and shows that, away from these links, stability alone suffices to justify surplus division. 

Several directions for future research remain open. On the theoretical side, it would be of interest to further explore the structural properties of the credible bargaining solution and to extend the characterization beyond the unit-surplus environment. More broadly, the notion of credibility based on submarket stability suggests a promising avenue for extending our approach to more general cooperative games, in which outside options are endogenously determined by the outcomes of reduced games. Finally, it would be worthwhile to investigate normative criteria that could further refine the credible bargaining solution or guide its selection among multiple stable outcomes.

\bigskip

\end{document}